\documentclass[conference,hidelinks]{IEEEtran}

\usepackage{amsmath,amssymb,amsthm,bm}
\usepackage{graphicx}
\usepackage{cite}
\usepackage{booktabs}
\usepackage{algorithm}
\usepackage{algorithmic}
\usepackage{lipsum}
\usepackage{xcolor}
\usepackage{hyperref}
\usepackage[acronym,shortcuts]{glossaries}

\newcommand{\C}{\mathbb{C}}
\newcommand{\R}{\mathbb{R}}
\newcommand{\Us}{\mathcal{U}_{\mathrm{s}}}
\newcommand{\Chat}{\widehat{\mathcal{C}}_{\mathrm{gc}}}
\newcommand{\Mhat}{\widehat{\mathcal{M}}_{\mathrm{gc}}}
\newcommand{\Mgc}{\mathcal{M}_{\mathrm{gc}}}
\newcommand{\Ig}{\mathcal{I}_g}
\newcommand{\mA}{\mathbf{A}}

\newcommand{\mE}{\mathbf{E}}
\newcommand{\mHd}{\mathbf{H}_{\mathrm{d}}}
\newcommand{\mHrx}{{\mathbf{H}_{\mathrm{RX}}}}
\newcommand{\mHrxg}{{\mathbf{H}_{\mathrm{RX},g}}}
\newcommand{\mHtx}{{\mathbf{H}_{\mathrm{TX}}}}
\newcommand{\mHtxg}{{\mathbf{H}_{\mathrm{TX},g}}}
\newcommand{\mI}{\mathbf{I}}
\newcommand{\mJ}{\mathbf{J}}

\newcommand{\mK}{\mathbf{K}}
\newcommand{\mO}{\mathbf{O}}
\newcommand{\mQ}{\mathbf{Q}}
\newcommand{\mR}{\mathbf{R}}
\newcommand{\mS}{\mathbf{S}}
\newcommand{\mU}{\mathbf{U}}
\newcommand{\mV}{\mathbf{V}}
\newcommand{\mZ}{\mathbf{Z}}
\newcommand{\bTheta}{\bm{\Theta}}
\newcommand{\bxi}{\bm{\xi}}
\newcommand{\bzetahat}{\widehat{\bm{\zeta}}}
\newcommand{\tr}{\operatorname{tr}}
\newcommand{\diag}{\operatorname{diag}}
\newcommand{\blkdiag}{\operatorname{blkdiag}}

\newcommand{\Tmap}{\mathcal{T}}

\theoremstyle{definition}

\theoremstyle{plain}
\newtheorem{proposition}{Proposition}
\theoremstyle{remark}
\newtheorem{remark}{Remark}

\title{Group-Connected Riemannian Manifold Optimization for Reciprocal BD-RIS}

\author{\IEEEauthorblockN{Marko Fidanovski$^{\star}$, Thushar Venkataramanaiah$^{\star}$, {Eduard Jorswieck$^{\dag}$, and} Giuseppe Thadeu Freitas de Abreu$^{\star}$}
\IEEEauthorblockA{$^{\star}$\textit{School of Computer Science and Engineering, Constructor University, 28759 Bremen, Germany}\\
$^{\dag}$\textit{Institute of Communications Technology, TU Braunschweig, 38106 Braunschweig, Germany}\\
Emails: [mfidanovsk, tvenkatara, gabreu]@constructor.university, {e.jorswieck@tu-braunschweig.de}}
}

\begin{document}

\newacronym{RIS}{RIS}{reconfigurable intelligent surface}
\newacronym{BD-RIS}{BD-RIS}{beyond-diagonal RIS}
\newacronym{MIMO}{MIMO}{multiple-input multiple-output}
\newacronym{MO}{MO}{manifold optimization}
\newacronym{PO}{PO}{phase optimization}
\newacronym{FP}{FP}{fractional programming}

\newacronym{iff}{iff}{if and only if}

\maketitle

{

\begin{abstract}
\Acp{RIS} provide a flexible means of engineering wireless propagation channels through configurable multiport scattering networks.
Among their architectures, group-connected \acp{BD-RIS} offer a practical tradeoff between scattering flexibility and hardware complexity by partitioning the surface into interconnected groups.
In reciprocal and lossless implementations, each group scattering block must be symmetric and unitary.
In this paper, we investigate sum-rate maximization for reciprocal group-connected \ac{BD-RIS}-assisted \ac{MIMO} systems under such structural constraints.
In particular, we build on a recent result where the feasible space determined by the symmetry and unitary constraints was characterized as a Riemannian manifold, extending it into a product manifold of symmetric-unitary block manifolds, thus expanding the approach to manifold phase optimization in the group-connected setting. 
As a consequence, the proposed group-extended framework performs tangent-space projections and Takagi retractions blockwise, while preserving the coupling among all groups through the common \ac{MIMO} equivalent channel. 
The formulation naturally includes single-connected and fully connected reciprocal \ac{BD-RIS} architectures as special cases and enables a flexible performance-complexity tradeoff through the group size.
\end{abstract}


\begin{IEEEkeywords}
Beyond-diagonal reconfigurable intelligent surface, Riemannian optimization, reciprocal and lossless scattering matrix, Takagi factorization, product manifold.
\end{IEEEkeywords}

\glsresetall
\glslocalunset{RIS}

\vspace{-2ex}
\section{Introduction}
\vspace{-2ex}

\IEEEPARstart{R}{econfigurable} intelligent surfaces (RISs) have emerged as a promising technology for future wireless communication systems, enabling the wireless propagation environment to be engineered rather than treated as a fixed medium. 
By controlling the response of many passive reconfigurable elements, \acp{RIS} can improve coverage, strengthen desired links, suppress interference, and enhance energy efficiency without requiring a proportional increase in active radio-frequency chains \cite{DiRenzo2020,Shen2022Scattering}. 
Early \ac{RIS} models mainly considered diagonal scattering matrices, where each element applies an independent phase shift. 
This single-connected architecture is simple and attractive for implementation, but its diagonal structure limits the available wave-domain processing capability \cite{Shen2022Scattering,Li2023BeyondDiagonal}.

\Acp{BD-RIS} address this limitation by allowing controllable interconnections among the surface elements \cite{Shen2022Scattering,Li2023BeyondDiagonal}. 
Scattering-parameter network models have shown that group-connected and fully connected impedance-network architectures can manipulate incident waves more generally than conventional single-connected \acp{RIS} \cite{Shen2022Scattering}. 
The architecture hierarchy was further developed in \cite{Li2023BeyondDiagonal}, where single-, group-, and fully-connected \acp{BD-RIS} were studied as increasingly expressive scattering models. 

In this hierarchy, the group-connected case is especially important because it provides an intermediate design, \emph{i.e.}, it offers richer scattering behavior than a diagonal \ac{RIS} while avoiding the full hardware complexity of complete interconnection. 
The fundamental motivation for this intermediate design was made precise in \cite{Nerini2023Pareto}, where the Pareto frontier of the performance--complexity tradeoff for \ac{BD-RIS} was derived, showing that single-connected and fully connected architectures are themselves the two extreme special cases of the broader group-connected family. 
Around the same time, \cite{Kim2023GroupConnected} considered scattering-matrix design specifically for a group-connected impedance network in a multiuser \ac{MIMO} setting, reinforcing that group-connected \ac{BD-RIS} is of direct interest for practical, hardware-constrained system design rather than only a theoretical interpolation between the two extremes.

Early \ac{BD-RIS} optimization studies often considered general beyond-diagonal scattering matrices \cite{Fang2024LowComplexityBDRIS,Zhou2024UnifiedBDRIS}, which may also represent non-reciprocal architectures and therefore provide a large number of design degrees of freedom.
However, non-reciprocal implementations generally require additional active or non-reciprocal circuit components, increasing hardware complexity \cite{Fidanovski2025Part1,Li2024NonReciprocalBDRIS}.
This has motivated the study of reciprocal lossless \ac{BD-RIS} architectures, characterized by symmetric and unitary scattering matrices \cite{Shen2022Scattering,Li2023BeyondDiagonal}.

Handling these coupled structural constraints has led to different optimization approaches.
Some works enforce them indirectly, including penalty dual decomposition \cite{Zhou2024UnifiedBDRIS} and convex relaxation followed by symmetric-unitary projection \cite{Fang2024LowComplexityBDRIS}.
Although these approaches yield practical optimization procedures, they do not optimize directly over the manifold structure induced by the simultaneous symmetry and unitarity constraints.
More recent works have explicitly formulated reciprocal \ac{BD-RIS} design as a \ac{MO} problem.
{In \cite{Fidanovski2025Part1,Fidanovski2026FPManifold,MoralesSandoval2026Distributed} unitarity is handled} through optimization over the Stiefel manifold, and {reciprocity is promoted} through a quadratic regularization term in the \ac{FP}-reformulated quadratic objective; {the most recent work extends the framework to distributed \ac{MIMO} beamforming.}

{More broadly, \ac{MO} is well suited to wireless variables with constant-modulus, orthonormal, or unitary structure as tangent-space updates and retractions exploit their intrinsic geometry while preserving feasibility and reducing reliance on penalty tuning or post hoc projection \cite{deSouza2024ManifoldRIS,Geng2024UnifiedManifold}.
Recent works also apply \ac{MO} to multicell interference management and joint \ac{RIS}--precoder design \cite{PanTWC2020,ZhongTVT2025}.
Its demonstrated use in \ac{RIS}-aided \ac{MIMO} and sensing waveform design \cite{deSouza2024ManifoldRIS,Geng2024UnifiedManifold,Rexhepi2025PAPR} establishes a strong foundation for scalable, hardware-aware \ac{BD-RIS} optimization and future quantum extensions \cite{Rexhepi2025QuantumMO}.}

A related work on \ac{BD-RIS}-assisted \ac{MIMO} physical-layer security introduces auxiliary variables to decouple the symmetry and unitarity constraints and combines an augmented-Lagrangian formulation with product Riemannian manifold optimization for joint transmit-beamforming and \ac{BD-RIS} coefficient design \cite{Xiong2025PLSBDRIS}.
These formulations exploit manifold optimization but do not directly optimize over the exact feasible set defined by the simultaneous symmetry and unitarity constraints.

In contrast, Santamaria \emph{et al.} \cite{Santamaria2026ICASSP,Santamaria2026SymmetricUnitary} characterize the intersection of the corresponding constraints as a smooth Riemannian manifold and develop phase-optimization methods that preserve both constraints throughout the iterations.
This geometric formulation is closely related to classical Takagi factorization \cite{Takagi1924Factorization} and the broader framework of Riemannian optimization on matrix manifolds \cite{Absil2008Optimization,Boumal2023SmoothManifolds}.
Extending this manifold-based treatment to the group-connected setting, this work considers reciprocal lossless group-connected \ac{BD-RIS} scattering matrices and formulates their exact feasible space as a product of symmetric-unitary block manifolds.
The resulting blockwise Takagi-based \ac{MO} framework respects both the product-manifold structure and the block-diagonal hardware architecture while retaining the coupling among groups via an equivalent \ac{MIMO} channel.



More specifically, the symmetric-unitary manifold methods in \cite{Santamaria2026ICASSP,Santamaria2026SymmetricUnitary} are developed for fully connected reciprocal \acp{BD-RIS}, where all reconfigurable elements are coupled through a single dense scattering matrix.
Consequently, their formulation does not directly capture the block-diagonal structure of group-connected architectures, in which each group is associated with a separate symmetric-unitary scattering block.
This structural distinction prevents a direct application of the fully connected modeling approach to the group-connected setting.



The main contribution of this paper is a product-manifold reformulation and optimization framework for reciprocal group-connected \ac{BD-RIS} scattering-matrix design.
Specifically, we construct the feasible configuration space as the product of per-group symmetric-unitary manifolds, establishing its smooth product-manifold structure with a corresponding block-diagonal realization.
Building on this geometry, we derive the product tangent space and metric, blockwise tangent projections, and Takagi-based retractions, incorporating them into a product \ac{MO}-based phase-optimization algorithm for sum-rate maximization.
The resulting framework preserves the group-connected hardware structure while retaining the global coupling among the scattering blocks.
Moreover, it naturally recovers the single-connected and fully connected reciprocal \ac{BD-RIS} architectures as special cases, with the group size governing the associated performance-complexity tradeoff.

}

\vspace{-0.5ex}
\section{System Model}

We consider a reciprocal lossless \ac{BD-RIS} with \(R\) reconfigurable elements assisting an \(N_t\times N_r\) \ac{MIMO} link.
The direct channel is \(\mHd\in\C^{N_r\times N_t}\), the \ac{BD-RIS}-to-receiver channel is \(\mHrx\in\C^{N_r\times {R}}\), and the transmitter-to-\ac{BD-RIS} channel is \(\mHtx\in\C^{{R}\times N_t}\).
For a scattering matrix \(\bTheta\in\C^{R\times R}\), the equivalent channel is denoted by
\vspace{-1ex}
\begin{equation}
\mE=\mHd+\mHrx\bTheta\mHtx.
    \label{eq:equivalent_channel_full}
    \vspace{-1ex}
\end{equation}

The achievable rate for isotropic signaling with total transmit power \(P\) and noise variance \(\sigma^2\) is
\vspace{-1ex}
\begin{equation}
    \eta
    =
    \log_2\det\!\Big(
    \mI_{N_r}
    +
    \tfrac{P}{{N_t}\sigma^2}\mE\mE^H
    \Big).
    \label{eq:rate_objective}
\vspace{-0.5ex}
\end{equation}

In turn, the passive reciprocal scattering constraints follow the model of reciprocal \acp{RIS} with the single-, group-, and fully connected \ac{BD-RIS} architecture hierarchy of \cite{Shen2022Scattering,Li2023BeyondDiagonal}.

Let the elements be partitioned into \(G\) disjoint groups, each with $R_G = R \slash G$ reconfigurable elements.
Then, the \(g\)th scattering block is given by
\vspace{-1ex}
\begin{gather}
    \bTheta_g
    =
    \bTheta_{[\Ig,\Ig]}\in\C^{R_G\times R_G},
    \label{eq:group_scattering_block}
    \\
    \Ig
    =
    \{R_G(g-1)+1:gR_G\},
    \vspace{-1ex}
\end{gather}
where \(g\in\{1,\ldots,G\}\).

The corresponding channel matrices are
\vspace{-1ex}
\begin{equation}
\begin{aligned}
    \mHrxg
    &=
    \mHrx_{[:,\Ig]}\in\C^{N_{r}\times R_{G}},
    \\
    \mHtxg
    &=
    \mHtx_{[\Ig,:]}\in\C^{R_{G}\times N_{t}}.
\end{aligned}
    \label{eq:group_channel_blocks}
    \vspace{-1ex}
\end{equation}

The cases \(R_G=1\) and \(R_G=R\) recover the single-connected and fully connected architectures, respectively.

\section{From Group Sets to a Product Manifold}
\vspace{-0.5ex}
The \(g\)th reciprocal lossless scattering block is a complex matrix \(\bTheta_g\in\C^{R_G\times R_G}\) that is symmetric due to reciprocity and unitary due to losslessness.
Its admissible set is
\begin{equation}
    \Us(R_G)
    \!=\!
    \left\{
    \bTheta_g\in\C^{R_G\times R_G}\!:\!
    \bTheta_g\!=\!\bTheta_g^T,\;
    \bTheta_g\bTheta_g^H\!=\!\mI_{R_G}
    \right\}.
    \label{eq:block_su_set}
\end{equation}

At this point,{ we look at} \(\Us(R_G)\) only {as a} feasible set, {ignoring the topology}.
The smooth manifold structure is introduced after the single-block result is recalled.
\vspace{-0.5ex}
\begin{proposition}[Cartesian product of admissible group choices]
The abstract set of feasible group-connected configurations, where each group has the same block size $R_G$, is
\vspace{-1.75ex}
\begin{equation}
\begin{aligned}
    \Chat
    &=
    \overbrace{\Us(R_G) \times \Us(R_G) \times \dots \times \Us(R_G)}^{G\text{ factors}}
    \\
    &=
    \left\{
    (\bTheta_1,\ldots,\bTheta_G):
    \bTheta_g\in\Us(R_G),\;g=1,\ldots,G
    \right\}.
\end{aligned}
    \label{eq:cartesian_product_set}
\end{equation}
\end{proposition}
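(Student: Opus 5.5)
The statement is essentially set-theoretic, so the plan is to define the feasible set of group-connected configurations from the hardware model and show it is in bijection with (and identified with) the $G$-fold Cartesian product of $\Us(R_G)$.

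\emph{Step 1: the feasible set of scattering matrices.} I will define the feasible set as
\[
\mathcal{F}=\{\bTheta\in\C^{R\times R}:\bTheta=\blkdiag(\bTheta_1,\ldots,\bTheta_G),\ \bTheta=\bTheta^T,\ \bTheta\bTheta^H=\mI_R\}.
\]
Here the block-diagonal pattern comes from the group-connected architecture: entries outside $\Ig\times\Ig$ vanish. Reciprocity and losslessness of the whole surface impose symmetry and unitarity.

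\emph{Step 2: reduce the global constraints to blockwise ones.} For a block-diagonal $\bTheta$ we have $\bTheta^T=\blkdiag(\bTheta_1^T,\ldots,\bTheta_G^T)$. Hence $\bTheta=\bTheta^T$ holds if and only if $\bTheta_g=\bTheta_g^T$ for every $g$. Similarly,
\[
\bTheta\bTheta^H=\blkdiag(\bTheta_1\bTheta_1^H,\ldots,\bTheta_G\bTheta_G^H),
\]
so $\bTheta\bTheta^H=\mI_R$ if and only if $\bTheta_g\bTheta_g^H=\mI_{R_G}$ for all $g$. Both equivalences rely on the disjoint partition $\{\Ig\}$, which makes the off-diagonal blocks of these products vanish identically. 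Consequently, $\bTheta\in\mathcal{F}$ if and only if each $\bTheta_g=\bTheta_{[\Ig,\Ig]}$ lies in $\Us(R_G)$ as defined in \eqref{eq:block_su_set}.

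\emph{Step 3: the bijection.} Consider the map $\bTheta\mapsto(\bTheta_{[\mathcal{I}_1,\mathcal{I}_1]},\ldots,\bTheta_{[\mathcal{I}_G,\mathcal{I}_G]})$.
\begin{itemize}
\item It sends $\mathcal{F}$ into $\Chat$ by Step 2.
\item It is injective because the off-diagonal blocks are zero.
\item It is surjective because $\blkdiag$ of any tuple in $\Chat$ lies in $\mathcal{F}$, again by Step 2.
\end{itemize}
The inverse map is $\blkdiag$, so the abstract feasible configuration set is exactly the Cartesian product in \eqref{eq:cartesian_product_set}. The second line of \eqref{eq:cartesian_product_set} is simply the definition of the $G$-fold product.

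\emph{Main obstacle.} There is little real difficulty. The only point needing care is to state explicitly that the group-connected model forces the off-diagonal blocks to be zero, since this is what lets the symmetry and unitarity of the full $\bTheta$ decouple into the blockwise conditions. Without that, global unitarity would not factor. I would verify the decoupling by writing out the $(g,g')$ block of $\bTheta\bTheta^H$, which equals $\sum_k \bTheta_{[\mathcal{I}_g,\mathcal{I}_k]}\bTheta_{[\mathcal{I}_{g'},\mathcal{I}_k]}^H$, and noting that it is zero for $g\neq g'$.
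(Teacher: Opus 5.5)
Your proof is correct, but it follows a different route from the paper's. The paper treats a feasible configuration directly as a tuple with one block per group. Each block is admissible iff it lies in $\Us(R_G)$, and the proposition then follows from the definition of a finite Cartesian product. That makes the statement essentially definitional, and it deliberately keeps the abstract tuple set separate from the physical matrix $\blkdiag(\bTheta_1,\ldots,\bTheta_G)$, which is introduced only afterwards.

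You start from the physical side instead. You define $\mathcal{F}$ as the set of block-diagonal $R\times R$ matrices that are globally symmetric and unitary. You then show that, under the block-diagonal sparsity pattern, $\bTheta=\bTheta^T$ and $\bTheta\bTheta^H=\mI_R$ decouple into the blockwise conditions $\bTheta_g\in\Us(R_G)$. Finally, you exhibit the bijection with $\Chat$: extract the diagonal blocks one way, apply $\blkdiag$ the other.

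Your route buys two things. First, it justifies why the per-group constraints are the right ones: they are exactly what global reciprocity and losslessness require once the off-diagonal blocks vanish. Equivalently, $\Mgc$ equals the intersection of $\Us(R)$ with the block-diagonal subspace. Second, it establishes in advance, at the set level, the injectivity of $\iota$ and the fact that its inverse on $\Mgc$ reads off the diagonal blocks, which the paper only asserts later in the Remark and Appendix. The price is small: you obtain the proposition up to the identification given by this bijection, not as a literal equality of sets. The paper's definition of the abstract configuration set as tuples gives it the literal equality for free.
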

\begin{proof}
\vspace{-0.5ex}
A feasible group-connected configuration contains one scattering block per group.
By \eqref{eq:block_su_set}, the \(g\)th block is admissible \ac{iff} \(\bTheta_g\in\Us(R_G)\).
The finite Cartesian product is defined as the set of ordered tuples whose \(g\)th component lies in the \(g\)th factor set.
Thus, a tuple belongs to the feasible configuration set \ac{iff} it belongs to the product in \eqref{eq:cartesian_product_set}.

\vspace{-2.75ex}
\end{proof}
\vspace{-1ex}
The physical scattering matrix is obtained only after the tuple has been chosen.
It is the block-diagonal matrix
\vspace{-0.5ex}
\begin{equation}
    \bTheta
    =
    \blkdiag(\bTheta_1,\ldots,\bTheta_G).
    \label{eq:block_diagonal_scattering}
\vspace{-0.5ex}
\end{equation}

Substituting \eqref{eq:block_diagonal_scattering} into \eqref{eq:equivalent_channel_full} yields the group-wise equivalent channel
\vspace{-0.5ex}
\begin{equation}
    \mE
    =
    \mHd+\sum_{g=1}^{G}\mHrxg\bTheta_g\mHtxg .
    \label{eq:group_equivalent_channel}
\vspace{-0.5ex}
\end{equation}

\section{Smooth Structure and Coordinate Maps}
\label{sec:smooth_structure}

The set in \eqref{eq:block_su_set} is not claimed to be smooth merely because it is written as an intersection of constraints.
The required smoothness follows from the characterization of the symmetric-unitary manifold in \cite{Santamaria2026SymmetricUnitary,Santamaria2026ICASSP}.
The key algebraic step is the Takagi factorization: every complex symmetric matrix \(\mA=\mA^T\) can be written as
\vspace{-0.5ex}
\begin{equation}
    \mA=\mQ\bm{\Sigma}\mQ^T,
    \label{eq:takagi_factorization}
\vspace{-0.5ex}
\end{equation}
where \(\mQ\) is unitary and \(\bm{\Sigma}\) is real, diagonal, and nonnegative \cite{Takagi1924Factorization}.
\vspace{-0.5ex}
\begin{proposition}[Single-block smooth manifold]
The set \(\Us(n)\) is a smooth real manifold with dimension \(n(n+1)/2\).
\end{proposition}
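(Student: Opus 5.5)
The plan is to show that \(\Us(n)\) is the image of the real symmetric matrices under a smooth map with a smooth local inverse, so that it is an embedded submanifold of \(\C^{n\times n}\cong\R^{2n^2}\) diffeomorphic to a homogeneous space. Two routes are available; I would use the Takagi route because it matches \eqref{eq:takagi_factorization}.

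\textbf{Step 1 (Takagi characterization).} For \(\bTheta\in\Us(n)\), write \(\bTheta=\mQ\bm{\Sigma}\mQ^T\) by \eqref{eq:takagi_factorization}. Unitarity gives
\[
\bTheta\bTheta^H=\mQ\bm{\Sigma}^2\mQ^H=\mI,
\]
so \(\bm{\Sigma}=\mI\) and \(\bTheta=\mQ\mQ^T\). Conversely, every \(\mQ\mQ^T\) with \(\mQ\) unitary is symmetric and unitary. Hence \(\Us(n)=\{\mQ\mQ^T:\mQ\in U(n)\}\) is the orbit of \(\mI\) under the smooth action \(\mU\cdot\bTheta=\mU\bTheta\mU^T\) of the compact Lie group \(U(n)\) on \(\C^{n\times n}\). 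This action preserves \(\Us(n)\) and is transitive on it.

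\textbf{Step 2 (homogeneous-space structure).} The stabilizer of \(\mI\) is \(\{\mU:\mU\mU^T=\mI\}\cap U(n)\). A unitary \(\mU\) with \(\mU^T=\mU^{-1}=\mU^H\) satisfies \(\overline{\mU}=\mU\), so it is real orthogonal; the stabilizer is \(O(n)\). Since \(U(n)\) is compact, the orbit map induces an injective immersion \(U(n)/O(n)\to\C^{n\times n}\). This map is proper because \(U(n)/O(n)\) is compact, so it is an embedding. Therefore \(\Us(n)\) is a smooth embedded submanifold diffeomorphic to \(U(n)/O(n)\), with real dimension
\[
\dim U(n)-\dim O(n)=n^2-\tfrac{n(n-1)}{2}=\tfrac{n(n+1)}{2}.
\]

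\textbf{Step 3 (explicit charts and tangent space).} For the optimization sections I would also record a concrete description. Differentiating \(\mU\mU^T\) at \(\mU=\mI\) along \(\mU=e^{t\bm{\Omega}}\), with \(\bm{\Omega}\) skew-Hermitian, gives \(\bm{\Omega}+\bm{\Omega}^T\). Writing \(\bm{\Omega}=\mathrm{i}\mS+\mK\) with \(\mS\) real symmetric and \(\mK\) real skew-symmetric, this equals \(2\mathrm{i}\mS\). So \(T_{\mI}\Us(n)=\{\mathrm{i}\mS:\mS=\mS^T\in\R^{n\times n}\}\), which has dimension \(n(n+1)/2\). Translating by \(\mQ\) gives
\[
T_{\bTheta}\Us(n)=\{\mathrm{i}\mQ\mS\mQ^T\}.
\]
A local chart is \(\mS\mapsto\mQ e^{\mathrm{i}\mS}\mQ^T\) near \(\mS=\mathbf{0}\). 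This map lands in \(\Us(n)\) because \(e^{\mathrm{i}\mS}\) is symmetric and unitary, and it is an immersion by the tangent computation above.

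\textbf{Main obstacle.} The hard part is rigor in Step 2, namely injectivity of the immersion modulo \(O(n)\) and the embedding property, rather than just exhibiting a parametrization. Compactness of \(U(n)\) settles it via the standard closed-orbit theorem. As a fallback I would verify directly that the chart in Step 3 is injective on a small neighborhood: the eigenvalues of \(e^{\mathrm{i}\mS}\) determine \(\mS\) once \(\|\mS\|<\pi\). Combined with the transitivity from Step 1, this gives a smooth atlas.
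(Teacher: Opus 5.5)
Your proposal is correct and follows essentially the same route as the paper: the Takagi factorization gives $\Us(n)=\{\mQ\mQ^T:\mQ\in\mathrm{U}(n)\}$, and the fibres of $\mQ\mapsto\mQ\mQ^T$ (equivalently, the stabilizer of $\mI$ under $\mU\cdot\bTheta=\mU\bTheta\mU^T$) are exactly the $\mathrm{O}(n)$-cosets, so the dimension is $n^2-n(n-1)/2=n(n+1)/2$. Your orbit-map and compactness step tightens the paper's argument: a bare identification with $\mathrm{U}(n)/\mathrm{O}(n)$ gives only an abstract smooth structure, whereas properness of the injective immersion shows it coincides with that of an embedded submanifold of $\C^{n\times n}$, which the later ambient tangent projections and retractions implicitly rely on.
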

\begin{proof}
If \(\mU\in\Us(n)\), then all Takagi singular values in \eqref{eq:takagi_factorization} are equal to one, and
\vspace{-1ex}
\begin{equation}
    \mU=\mQ\mQ^T .
    \label{eq:symmetric_unitary_takagi}
\vspace{-0.5ex}
\end{equation}

Thus the map \(\pi:\mathrm{U}(n)\rightarrow\Us(n)\), \(\pi(\mQ)=\mQ\mQ^T\), is surjective.
For any real orthogonal matrix \(\mO\in\mathrm{O}(n)\), \(\pi(\mQ\mO)=\pi(\mQ)\).
Conversely, equal images differ by such a real orthogonal factor.
Hence \(\Us(n)\) is identified with the quotient \(\mathrm{U}(n)/\mathrm{O}(n)\), {where, for completeness, \(\mathrm{U}(n)\) and \(\mathrm{O}(n)\) denote unitary and real orthogonal groups, respectively}.
Since \(\mathrm{U}(n)\) is a smooth Lie group and \(\mathrm{O}(n)\) is a closed Lie subgroup, the quotient is a smooth manifold.
Its real dimension is
\vspace{-0.5ex}
\begin{equation}
    \dim_{\R}\Us(n)
    =
    n^2-\frac{n(n-1)}{2}
    =
    \frac{n(n+1)}{2}.
    \label{eq:single_dimension}
\vspace{-1ex}
\end{equation}
\vspace{-2ex}
\end{proof}

After each factor is endowed with this smooth structure, the same underlying Cartesian-product set in \eqref{eq:cartesian_product_set} becomes the product manifold
\vspace{-2ex}
\begin{equation}
    \Mhat
    =
    \overbrace{\Us(R_G)\times\cdots\times\Us(R_G)}^{G\ \mathrm{factors}}.
    \label{eq:product_manifold}
    \vspace{-0.5ex}
\end{equation}

The notation \(\Mhat\) emphasizes that the object is no longer only a set of tuples.
It is the tuple set equipped with the product topology and the product smooth {structure}. 
The product manifold, its tangent bundle, Riemannian metric, and local coordinate charts follow directly from the standard construction of product manifolds; see \cite[Ch. 3]{Boumal2023SmoothManifolds} for details.
\vspace{-0.5ex}
\begin{proposition}[Local Euclidean coordinates of the product manifold]
Let \(\widehat{\bTheta}=(\bTheta_1,\ldots,\bTheta_G)\in\Mhat\).
For each factor manifold \(\Us(R_G)\), choose a chart \((\Omega_g,\varphi_g)\) with \(\bTheta_g \in \Omega_g\), where
\vspace{-1ex}
\begin{gather}
    \varphi_g:\Omega_g\subset\Us(R_G)
    \rightarrow
    \varphi_g(\Omega_g)\subset\R^{d_g},\\
    d_g=\frac{R_G(R_G+1)}{2}.
    \label{eq:factor_chart}
    \vspace{-1ex}
\end{gather}
\vspace{-1ex}
Then
\vspace{-1ex}
\begin{gather}
    \varphi
    : \Omega_1\times\cdots\times\Omega_G
    \rightarrow
    \R^{d_1+\cdots+d_G},
    \\
    \varphi(\bTheta_1,\ldots,\bTheta_G)
    =
    \left[
    \varphi_1(\bTheta_1)^T,\ldots,
    \varphi_G(\bTheta_G)^T
    \right]^T,
        \label{eq:product_chart}
\vspace{-1ex}
\end{gather}
is a chart around \(\widehat{\bTheta}\), and consequently,
\vspace{-0.5ex}
\begin{equation}
    \dim_{\R}\Mhat
    =
    G\frac{R_G(R_G+1)}{2}.
    \label{eq:product_dimension}
\vspace{-0.5ex}
\end{equation}
\end{proposition}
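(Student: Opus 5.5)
The plan is to verify the chart axioms for the product map \(\varphi\) directly from the corresponding properties of the factor charts \((\Omega_g,\varphi_g)\) supplied by the single-block smooth manifold proposition, and then read off the dimension.

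\emph{Step 1: domain.} Each \(\Omega_g\) is open in \(\Us(R_G)\). By definition of the product topology on \(\Mhat\), the set \(\Omega:=\Omega_1\times\cdots\times\Omega_G\) is therefore open in \(\Mhat\). It contains \(\widehat{\bTheta}\) because \(\bTheta_g\in\Omega_g\) for every \(g\).

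\emph{Step 2: image.} The map \(\varphi\) is injective, since equal stacked vectors force \(\varphi_g(\bTheta_g)=\varphi_g(\bTheta_g')\) for each \(g\), and each \(\varphi_g\) is injective. Its image is the product \(\varphi_1(\Omega_1)\times\cdots\times\varphi_G(\Omega_G)\). Identify \(\R^{d_1}\times\cdots\times\R^{d_G}\) with \(\R^{d_1+\cdots+d_G}\) by stacking coordinates; this identification is a linear homeomorphism. Under it, the image is a product of open sets and hence open.

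\emph{Step 3: homeomorphism.} The component functions of \(\varphi\) are \(\varphi_g\circ \mathrm{pr}_g\), where \(\mathrm{pr}_g\) is the projection onto the \(g\)th factor. Each is continuous, so \(\varphi\) is continuous by the universal property of the product topology. The inverse is
\[
\varphi^{-1}(\mathbf{x}_1,\ldots,\mathbf{x}_G)=\big(\varphi_1^{-1}(\mathbf{x}_1),\ldots,\varphi_G^{-1}(\mathbf{x}_G)\big),
\]
which is continuous by the same argument. Hence \(\varphi\) is a homeomorphism onto an open subset of Euclidean space, that is, a topological chart around \(\widehat{\bTheta}\).

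\emph{Step 4: smooth compatibility.} This is the only step requiring some care, since the claim is that \(\varphi\) belongs to the product smooth structure. Take two such product charts \(\varphi\) and \(\psi\), built from factor charts \(\varphi_g\) and \(\psi_g\). On the overlap, the transition map is
\[
\psi\circ\varphi^{-1}(\mathbf{x}_1,\ldots,\mathbf{x}_G)=\big(\psi_1\circ\varphi_1^{-1}(\mathbf{x}_1),\ldots,\psi_G\circ\varphi_G^{-1}(\mathbf{x}_G)\big).
\]
This is a block-separable map whose components are the factor transition maps. Each factor transition map is \(C^\infty\) because \(\Us(R_G)\) is a smooth manifold by the single-block smooth manifold proposition. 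Therefore the product transition map is \(C^\infty\) with a block-diagonal Jacobian. The collection of all such product charts thus forms a smooth atlas, and \(\varphi\) is one of its charts; this is precisely the standard product structure of \cite[Ch.~3]{Boumal2023SmoothManifolds}. The one subtlety is that the quotient construction \(\mathrm{U}(n)/\mathrm{O}(n)\) provides charts \(\varphi_g\) only abstractly. The argument uses only the existence of such charts and never their explicit form, so this causes no difficulty.

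\emph{Step 5: dimension.} The codomain of \(\varphi\) is \(\R^{d_1+\cdots+d_G}\), and all factors have the same size, so \(d_g=R_G(R_G+1)/2\) for every \(g\). Summing gives \(\dim_{\R}\Mhat=G\,R_G(R_G+1)/2\), in agreement with \eqref{eq:single_dimension}. Because \(\Mhat\) is connected when its factors are, or in any case because every point admits a chart of this same form, the dimension is well defined.
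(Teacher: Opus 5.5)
Your proposal is correct and follows the same route as the paper: open basic neighborhoods $\Omega_1\times\cdots\times\Omega_G$ in the product topology, stacked factor coordinates, and transition maps that split into the factor transitions $\varphi_g\circ\psi_g^{-1}$ (as in the paper's appendix), with the dimension obtained by summing the $d_g$. Your Steps 2--3, which check injectivity, openness of the image, and continuity of $\varphi$ and $\varphi^{-1}$, spell out the homeomorphism details that the paper leaves implicit.
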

\begin{proof}
The product topology has basic neighborhoods of the form \(\Omega_1\times\cdots\times\Omega_G\), with each \(\Omega_g\) open in its factor manifold.
The map in \eqref{eq:product_chart} assigns Euclidean coordinates to points in this neighborhood by stacking the factor-coordinate vectors.
On overlaps, the transition map is the product of the factor transition maps \(\varphi_g\circ\psi_g^{-1}\), and is therefore smooth, as detailed in Appendix~\ref{app:smooth}.
Thus the product charts define a smooth atlas, and the coordinate dimension is the sum of the factor dimensions.
\end{proof}

\begin{remark}[Coordinate map versus physical embedding]
The chart in \eqref{eq:product_chart} is the local map from a neighborhood of a manifold point to a Euclidean coordinate space.
It should not be confused with the block-diagonal physical embedding
\begin{gather}
    \iota : \Mhat\rightarrow\C^{R\times R}\cong\R^{2R^2},\\
    \iota(\bTheta_1,\ldots,\bTheta_G)
    =
    \blkdiag(\bTheta_1,\ldots,\bTheta_G).
    \label{eq:physical_embedding}
\end{gather}

The map \(\iota\) places the abstract product-manifold point in the ambient matrix space, while \(\varphi\) gives local coordinates for analysis on the manifold.
\end{remark}

The physical feasible manifold is the embedded image
\vspace{-0.5ex}
\begin{equation}
    \Mgc\!
    \!=\!
    \iota(\Mhat)
    \!=\!
    \left\{
    \blkdiag(\bTheta_1,\ldots,\!\bTheta_G)\!:\!
    \bTheta_{\!g}\!\!\in\!\Us(R_G)\!
    \right\}\!\!.
    \label{eq:physical_manifold}
\end{equation}

The inverse of \(\iota\) on this image reads the diagonal blocks in \eqref{eq:group_scattering_block} for \(g=1,\ldots,G\).
Therefore \(\Mhat\) and \(\Mgc\) are diffeomorphic, although their elements are tuples and matrices, respectively.
Finally, the corresponding sum-rate maximization problem can be reformulated as
\vspace{-0.5ex}
\begin{equation}
\vspace{-0.5ex}
    \underset{\bTheta \in \Mgc}{\mathrm{maximize}} \quad \eta(\bTheta),
    \label{eq:group_rate_optimization}
\end{equation}
where \(\eta(\bTheta)\) is evaluated using the equivalent channel in \eqref{eq:group_equivalent_channel}. 

Although the feasible set has a product structure across the \(G\) groups, the objective function does not decompose into a sum of \(G\) independent objectives.
In particular, every block \(\bTheta_g\) contributes to the same equivalent channel \(\mE\), so changing one block affects the value of the global rate and the gradients associated with the remaining blocks. 
Hence, the product structure is useful for deriving blockwise tangent projections and retractions, but does not imply that the \(G\) blocks can be optimized as  independent rate-maximization problems.

\section{Product Tangent Space, Projection, and Retraction}

Let \(\bTheta_g=\mQ_g\mQ_g^T\) be a Takagi factorization of the \(g\)th block.
Santamaria \emph{et al.} show that the tangent space of \(\Us(R_G)\) at \(\bTheta_g\) is \cite{Santamaria2026SymmetricUnitary}
\begin{equation}
    \mathrm{T}_{\bTheta_{\!g}}\Us(R_G)
    \!=\!\!
    \left\{
    j\mQ_g\mR_g\mQ_g^T\!:\!
    \mR_g\!\in\!\R^{R_G\times R_G},\; \!\!
    \mR_g\!=\!\mR_g^T
    \right\}\!\!.
    \label{eq:factor_tangent}
\end{equation}

\begin{proposition}[Product tangent space and metric]
At \(\widehat{\bTheta}=(\bTheta_1,\ldots,\bTheta_G)\), the tangent space is
\begin{equation}
    \mathrm{T}_{\widehat{\bTheta}}\Mhat
    =
    \mathrm{T}_{\bTheta_1}\Us(R_G)
    \oplus\cdots\oplus
    \mathrm{T}_{\bTheta_G}\Us(R_G).
    \label{eq:product_tangent}
\end{equation}

{Here, $\oplus$ denotes the direct sum: a tangent vector on the product
manifold is a tuple of blockwise tangent directions, one from each group \cite{ConradTangentProducts}.
}

For tangent vectors \(\widehat{\bxi}=(\bxi_1,\ldots,\bxi_G)\) and \(\bzetahat=(\bm{\zeta}_1,\ldots,\bm{\zeta}_G)\), the product metric is
\begin{equation}
    \langle\widehat{\bxi},\bzetahat\rangle_{\widehat{\bTheta}}
    =
    \sum_{g=1}^{G}
    \Re\!\left\{\tr(\bxi_g^H\bm{\zeta}_g)\right\},
    \label{eq:product_metric}
\end{equation}

\end{proposition}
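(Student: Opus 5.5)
Our plan is to derive both parts of the statement from the standard construction of product manifolds, using the product charts of the preceding proposition, together with the factor tangent space \eqref{eq:factor_tangent}.

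\textbf{Tangent space.} Work through the physical embedding \(\iota\) of \eqref{eq:physical_embedding} and identify tangent vectors with velocities \(\dot{\gamma}(0)\) of smooth curves \(\gamma(t)=(\bTheta_1(t),\ldots,\bTheta_G(t))\) with \(\gamma(0)=\widehat{\bTheta}\).
\begin{itemize}
\item Each component curve \(\bTheta_g(t)\) is a smooth curve in \(\Us(R_G)\) through \(\bTheta_g\), because the projections \(p_g:\Mhat\to\Us(R_G)\) are smooth: in the product chart \eqref{eq:product_chart} they are coordinate projections. Hence \(\dot\gamma(0)=(\dot\bTheta_1(0),\ldots,\dot\bTheta_G(0))\) with \(\dot\bTheta_g(0)\in\mathrm{T}_{\bTheta_g}\Us(R_G)\). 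This gives the inclusion "\(\subseteq\)".
\item Conversely, given any tuple \((\bxi_1,\ldots,\bxi_G)\) with \(\bxi_g=j\mQ_g\mR_g\mQ_g^T\), choose factor curves \(c_g(t)\) realizing each \(\bxi_g\). The explicit choice \(c_g(t)=\mQ_g\exp(jt\mR_g)\mQ_g^T\) works: it is symmetric and unitary because \(\mR_g\) is real symmetric, and its derivative at \(t=0\) is \(\bxi_g\). The product curve \((c_1,\ldots,c_G)\) lies in \(\Mhat\) and has velocity \((\bxi_1,\ldots,\bxi_G)\). This gives "\(\supseteq\)".
\end{itemize}

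As a consistency check, compare dimensions. Each factor has \(\dim\mathrm{T}_{\bTheta_g}\Us(R_G)=R_G(R_G+1)/2\), the number of free entries of a real symmetric \(\mR_g\). The sum over \(g\) equals \eqref{eq:product_dimension}. The map \(d p_1\times\cdots\times d p_G\) is therefore a linear surjection between spaces of equal dimension, hence an isomorphism, which justifies writing \(\oplus\). Under \(\iota\), the same tangent vectors appear as \(\blkdiag(\bxi_1,\ldots,\bxi_G)\).

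\textbf{Metric.} Each factor carries the metric induced from the ambient real inner product \(\Re\{\tr(\mathbf{X}^H\mathbf{Y})\}\) on \(\C^{R_G\times R_G}\cong\R^{2R_G^2}\). The product Riemannian metric is by definition the sum of the factor metrics evaluated on the components, which is \eqref{eq:product_metric}. We would add two remarks.
\begin{itemize}
\item This metric coincides with the one induced by the embedding \(\iota\) into \(\C^{R\times R}\). Indeed,
\[
\Re\{\tr(\blkdiag(\bxi_g)^H\blkdiag(\bm{\zeta}_g))\}=\sum_g\Re\{\tr(\bxi_g^H\bm{\zeta}_g)\},
\]
because the off-diagonal blocks vanish. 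So \(\Mgc\) is a Riemannian submanifold with this metric.
\item Smoothness in \(\widehat{\bTheta}\), symmetry, bilinearity and positive definiteness are inherited termwise from the Frobenius inner product.
\end{itemize}

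\textbf{Main obstacle.} The only nontrivial point is the surjectivity step: showing that every tuple of factor tangent vectors is realized by a curve. The explicit exponential curve above settles it, relying only on \eqref{eq:factor_tangent} from \cite{Santamaria2026SymmetricUnitary}. Everything else is the routine product-manifold construction of \cite[Ch.~3]{Boumal2023SmoothManifolds}.
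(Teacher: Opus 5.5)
Your proposal is correct and follows the same route as the paper: tangent vectors of \(\Mhat\) are velocities of componentwise curves \((c_1(t),\ldots,c_G(t))\), and the metric is by definition the sum of the factor metrics. You go beyond the paper's one-line proof by making explicit what it leaves implicit: the reverse inclusion via the curves \(\mQ_g\exp(jt\mR_g)\mQ_g^T\), the dimension check, and the agreement of this metric with the one induced by \(\iota\).
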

\begin{proof}
A smooth curve on \(\Mhat\) has the componentwise form \(c(t)=(c_1(t),\ldots,c_G(t))\), where \(c_g(t)\in\Us(R_G)\).
Differentiating at \(t=0\) gives \(\dot c(0)=(\dot c_1(0),\ldots,\dot c_G(0))\), which proves \eqref{eq:product_tangent}, while \eqref{eq:product_metric} is the standard product Riemannian metric obtained by summing the factor inner products \cite{Absil2008Optimization,Boumal2023SmoothManifolds}.
\end{proof}

Under the embedding \(\iota\), a tangent vector becomes the block-diagonal ambient matrix
\begin{equation}
    D\iota_{\widehat{\bTheta}}[\widehat{\bxi}]
    =
    \blkdiag(\bxi_1,\ldots,\bxi_G).
    \label{eq:embedded_tangent}
\end{equation}

Thus the tangent space respects the same group sparsity pattern as the feasible scattering matrix.

For an arbitrary ambient block direction \(\mJ_g\in\C^{R_G\times R_G}\),
which will later be chosen as the \(g\)th block of the Euclidean
rate-gradient matrix, the tangent projection map \(\Tmap_{\bTheta_g}:\C^{R_G\times R_G}\rightarrow \mathrm{T}_{\bTheta_g}\Us(R_G)\) is \cite{Santamaria2026SymmetricUnitary}
\vspace{-0.5ex}
\begin{gather} 
    \mR_g
    =
    \Im\!\left\{
    \mQ_g^H
    \frac{\mJ_g+\mJ_g^T}{2}
    \mQ_g^*
    \right\},
    \label{eq:projection_parameter}
    \\
    \Tmap_{\bTheta_g}(\mJ_g)
    =
    j\mQ_g\mR_g\mQ_g^T .
    \label{eq:factor_projection}
\vspace{-0.5ex}
\end{gather}

The product tangent projection map is obtained by applying \eqref{eq:factor_projection} to each block, as
\vspace{-0.5ex}
\begin{equation}
    \Tmap_{\widehat{\bTheta}}
    (\mJ_1,\ldots,\mJ_G)
    =
    \left(\Tmap_{\bTheta_g}(\mJ_g)\right)_{g=1}^{G}.
    \label{eq:product_projection}
\vspace{-0.5ex}
\end{equation}

Let \(\Pi_{\Us(R_G)}(\mA)=\mV\mV^T\) when \(\mA=\mV\bm{\Sigma}\mV^T\) is a Takagi factorization of a symmetric matrix.
The factor retraction is
\vspace{-0.5ex}
\begin{equation}
    \mathcal{R}_{\bTheta_g}(\bxi_g)
    =
    \Pi_{\Us(R_G)}(\bTheta_g+\bxi_g),
    \label{eq:factor_retraction}
\vspace{-0.5ex}
\end{equation}
which is the projection-like Takagi retraction used for \(\Us(R_G)\) \cite{Santamaria2026SymmetricUnitary,AbsilMalick2012ProjectionRetractions}.
The product retraction is
\vspace{-0.5ex}
\begin{equation}
\mathcal{R}_{\widehat{\bTheta}}(\widehat{\bxi})
    =
    \left(
    \mathcal{R}_{\bTheta_1}(\bxi_1),\ldots,
    \mathcal{R}_{\bTheta_G}(\bxi_G)
    \right).
    \label{eq:product_retraction}
\vspace{-0.5ex}
\end{equation}

Since each factor map satisfies the two retraction axioms, their product also satisfies them.

\vspace{-1ex}

{\section{Rate Gradient and Product MO-PO Algorithm}

Let
\begin{equation}
    \rho =
    \frac{P}{N_t\sigma^2},
    \label{eq:rho_definition}
\end{equation}
and
\begin{equation}
    \mK =
    \mI_{N_r}
    +
    \rho \mE\mE^H .
    \label{eq:rate_matrix}
\end{equation}

Using Wirtinger calculus, \(\bTheta\) and \(\bTheta^{\star}\) are treated as
independent variables. For the real-valued objective \(\eta\), the
Euclidean ascent direction is taken with respect to \(\bTheta^{\star}\). The
corresponding Euclidean gradient is \cite{Santamaria2026SymmetricUnitary,PalomarVerdu2006Gradient}
\begin{equation}
    \mJ
    \triangleq
    \nabla_{\bTheta}\eta \triangleq 2\frac{\partial \eta}{\partial \bTheta^{\star}}
    =
    \frac{2\rho}{\ln 2}
    \mHrx^H
    \mK^{-1}
    \mE
    \mHtx^H .
    \label{eq:full_gradient}
\end{equation}

\begin{algorithm}[t]
\caption{Product \ac{MO}-\ac{PO} for group-connected reciprocal \ac{BD-RIS}}
\label{alg:product_mopo}
\begin{algorithmic}[1]
\REQUIRE \(\mHd,\mHrx,\mHtx\), \(G\), \(R_G\), \(P\), \(\sigma^2\), tolerance \(\epsilon\)
\STATE Initialize \(\bTheta_g^{(0)}\in\Us(R_G)\), \(g=1,\ldots,G\)
\STATE Form \(\mE\) from \eqref{eq:group_equivalent_channel} and compute  {\(\eta^{(0)}\)} 
\FOR{\(r=0,1,\ldots\)}
    \FOR{\(g=1,\ldots,G\)}
        \STATE Compute \(\mK\) from \eqref{eq:rate_matrix} and \(\mJ_g\) from \eqref{eq:block_gradient}
        \STATE Compute a Takagi factor \(\bTheta_g=\mQ_g\mQ_g^T\)
        \STATE Compute \(\Tmap_{\bTheta_g}(\mJ_g)\) by \eqref{eq:projection_parameter} and \eqref{eq:factor_projection}
        \STATE Compute \(\mQ_{r,g}\) from \eqref{eq:projected_eigendecomposition}
        \STATE Optimize the phases in \eqref{eq:block_phase_parameterization} using the one-phase model \eqref{eq:one_phase_channel}
        \STATE Replace \(\bTheta_g\) by the updated symmetric-unitary block and recompute \(\mE\)
    \ENDFOR
    \STATE Compute the updated rate \(\eta^{(r+1)}\)
    \IF{\(|\eta^{(r+1)}-\eta^{(r)}|<\epsilon\)}
        \STATE \textbf{break}
    \ENDIF
\ENDFOR
\ENSURE \(\bTheta=\blkdiag(\bTheta_1,\ldots,\bTheta_G)\)
\end{algorithmic}
\end{algorithm}

Only the diagonal group blocks of \(\mJ\) can pair with feasible embedded
tangent directions, so that the \(g\)th block gradient is
\begin{equation}
    \mJ_g
    =
    \mJ_{[\Ig,\Ig]}
    =
    \frac{2\rho}{\ln 2}
    \mHrxg^H
    \mK^{-1}
    \mE
    \mHtxg^H .
    \label{eq:block_gradient}
\end{equation}

The positive scalar \(2\rho/\ln 2\) may be omitted when only the ascent
direction is needed, but it is retained here to state the exact Euclidean
gradient.}




The block \(\mJ_g\) depends on every group through \(\mE\), even though its tangent projection is performed on the \(g\)th factor only.
In turn, the \ac{PO} step follows the geodesic parameterization of \(\Us(R_G)\).
After projecting \(\mJ_g\), write the tangent direction as \(j\mQ_g\mR_g\mQ_g^T\) and compute the eigendecomposition
\vspace{-0.5ex}
\begin{equation}
    j\mR_g
    =
    \mV_g
    \diag(j\theta_{g,1},\ldots,j\theta_{g,R_G})
    \mV_g^T.
    \label{eq:projected_eigendecomposition}
\vspace{-0.5ex}
\end{equation}

{where \(\mV_g\) is real orthogonal and \(\theta_{g,k}\in\R\), and} \(\mQ_{r,g}=\mQ_g\mV_g\), {such that} the candidate block is
\vspace{-0.5ex}
\begin{equation}
    \bTheta_g(\bm{\phi}_g)
    =
    \mQ_{r,g}
    \diag(e^{j\phi_{g,1}},\ldots,e^{j\phi_{g,R_G}})
    \mQ_{r,g}^T.
    \label{eq:block_phase_parameterization}
\vspace{-0.5ex}
\end{equation}

For a trial value of phase \(k\) in group \(g\), with all other variables fixed, the equivalent channel can be written as
\begin{equation}
    \mE
    =
    \mS
    +
    e^{j\phi_{g,k}}\mathbf{h}_{\mathrm{RX},g,k}\mathbf{h}_{\mathrm{TX},g,k}^T ,
    \label{eq:one_phase_channel}
\end{equation}
{where, \(\mathbf q_{r,g,k}\in\C^{R_G}\) is the \(k\)th column of \(\mQ_{r,g}\), \(\mathbf h_{\mathrm{RX},g,k}=\mHrxg\mathbf q_{r,g,k}\in\C^{N_r}\), \(\mathbf h_{\mathrm{TX},g,k}=\mHtxg^T\mathbf q_{r,g,k}\in\C^{N_t}\), and}
\vspace{-1ex}
{
\begin{equation}
\vspace{-1ex}
\mS
=
\mHd
+ \hspace{-2ex}
\sum_{\ell=1, \ell\neq g}^{G} \hspace{-1ex}
\mathbf H_{\mathrm{RX},\ell}\bTheta_{\ell}\mathbf H_{\mathrm{TX},\ell}
+ \!\!\!\!\!\!
\sum_{m=1, m\neq k}^{R_G} \hspace{-2ex}
e^{j\phi_{g,m}}
\mathbf h_{\mathrm{RX},g,m}\mathbf h_{\mathrm{TX},g,m}^{T}.
\end{equation}}

{For each group update, \(\mQ_{r,g}\) is fixed and \(\phi_{g,1},\ldots,\phi_{g,R_G}\) are updated sequentially, each by globally maximizing \(\eta\) under \eqref{eq:one_phase_channel} using the latest values of the other phases.}

This is the same rank-one phase subproblem as in the fully connected \ac{MO}-\ac{PO} method \cite{Santamaria2026SymmetricUnitary,Santamaria2026ICASSP}, with the  dominant manifold operations in one outer sweep given by blockwise Takagi factorizations, eigendecompositions, and matrix multiplications.

{\paragraph{Convergence remark}
Each exact scalar phase update cannot decrease \(\eta\); compactness of \(\Mhat\) and continuity of \(\eta\) therefore guarantee convergence of the achieved-rate sequence.
If the scalar maximizers are unique and the factorization choices defining \(\mQ_{r,g}\) vary continuously, the cyclic coordinate-ascent conditions in \cite[Sec.~III-B]{Santamaria2026SymmetricUnitary} ensure that no phase can improve \(\eta\) at the limit of any convergent subsequence of iterates.
Since the curve \(\phi_{g,k}=t\theta_{g,k}\) starts in direction \(\Tmap_{\bTheta_g}(\mJ_g)\), a nonzero projected gradient would improve at least one phase; thus every such limit is first-order Riemannian stationary, although global optimality and convergence to a single point are not guaranteed.}

{For fixed antenna dimensions, the dominant manifold-geometry operations in one outer sweep are the blockwise Takagi factorizations and eigendecompositions, whose complexity scales as}
\begin{equation}
    \sum_{g=1}^{G}O(R_G^3).
    \label{eq:block_complexity}
\end{equation}

Since \(R=G R_G\), this becomes \(O(RR_G^2)\), compared with \(O(R^3)\) for a fully connected update on one \(R\times R\) block.
The group size therefore controls a direct performance-complexity tradeoff.

\begin{remark}[Per-group low-rank modification]
\label{rem:low_rank_modification}
The global low-rank construction in \cite{Santamaria2026SymmetricUnitary} uses a basis \(\mU_Z\) for the column space of \(\mZ=[\mHrx^H,\;\mHtx^*]\) and forms \(\bTheta_{\mathrm{lr}}=\mU_Z\widetilde{\bTheta}\mU_Z^T\).
This matrix is generally dense and therefore does not preserve the group-connected block structure.
A compatible modification of Algorithm~\ref{alg:product_mopo} would replace lines 6--10 for group \(g\) by a reduced problem built from
\begin{equation}
    \mZ_g=
    \left[
    \mHrxg^H,\;
    \mHtxg^*
    \right],
    \label{eq:per_group_low_rank_space}
\end{equation}
optimize a smaller symmetric-unitary factor in the corresponding subspace, and complete the result inside \(\Us(R_G)\).
This modification preserves the block-diagonal architecture, but its optimality is not claimed here.
\end{remark}

\section{Simulation Results}

We evaluate the achievable rate for \(N_t=N_r=4\) antennas and \(P_{\max}=0\;\mathrm{dBm}\).
The noise power is computed for a \(20\;\mathrm{MHz}\) bandwidth and a \(0\;\mathrm{dB}\) receiver noise figure.
The \ac{BD-RIS}-assisted links \(\mHrx\) and \(\mHtx\) follow a Rician fading model with factor \(K_{\mathrm{RIS}}=3\), while the direct channel \(\mHd\) uses \(K_{\mathrm{d}}=0\), corresponding to Rayleigh fading\footnote{The remaining channel parameters follow \cite[Sec.~IV-C]{Santamaria2026SymmetricUnitary}.}.

For each value of \(R\), the reported rate is averaged over \(100\) independent channel realizations.
Fig.~\ref{fig:rate_vs_elements} compares the resulting rate as the number of reconfigurable elements increases.
For clarity, different markers distinguish the architecture types, and such notation is used throughout the paper.

\begin{figure}[H]
    \centering
    \includegraphics[width=\columnwidth]{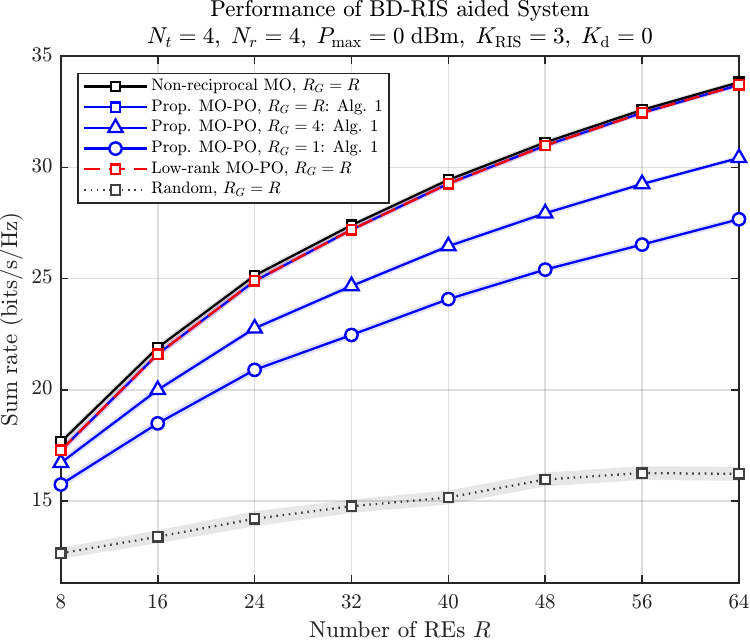}
    \caption{Achievable rate performance versus the number of reconfigurable elements (REs) for the considered reciprocal {and non-reciprocal} \ac{BD-RIS} architectures. {Recall, $R_G = 1$ corresponds to the conventional (diagonal) \ac{RIS}; furthermore, the shaded bands show 95\% confidence intervals of the Monte Carlo sample mean over 100 independent channel realizations.}}
    \label{fig:rate_vs_elements}
\vspace{-2ex}
\end{figure}

The results preserve the expected ordering.
Increasing \(R_G\) increases the number of symmetric-unitary degrees of freedom in \eqref{eq:product_dimension}, and the rate improves accordingly.
The full-block product \ac{MO}-\ac{PO} and the low-rank fully connected reference almost coincide in this experiment, which is consistent with the fully connected low-rank result in \cite{Santamaria2026SymmetricUnitary}.
For the group-connected architecture, however, the low-rank reduction must be made compatible with the block structure as discussed in Remark~\ref{rem:low_rank_modification}.

{Fig.~\ref{fig:rate_vs_elements} also has several benchmark alternatives for comparison.
This includes (in grey) a curve corresponding to the case when the scattering matrix is randomly generated as a unitary symmetric matrix, which serves as a baseline, since such a ``solution'' only satisfies the practical constraints of the problem, but not the optimization objective.
Next, as a best feasible case reference, the results corresponding to a non-reciprocal \ac{BD-RIS} scattering matrix design obtained by maximizing \eqref{eq:rate_objective} over \(\mathrm{U}(R)=\mathrm{St}_{\C}(R,R)\) using Manopt \cite{Boumal2014Manopt} are also shown (in black).
Since this case does not incorporate a reciprocity constraint, such that \(\Us(R)\subset\mathrm{U}(R)\), the results obtained over \(\mathrm{U}(R)\) represent an upper-bound to the reciprocal solution.

Remarkably, it is found that the reciprocal and non-reciprocal curves nearly coincide, which indicates that the reciprocal design has no noticeable loss compared to the non-reciprocal counterpart under the considered model.
This result is particularly motivating since gains of non-reciprocal designs over reciprocal ones have been reported, e.g., in interference-coupled settings \cite{Li2024NonReciprocalBDRIS}.
Further evaluation under other system models is left for a journal extension of this work.

Finally, the figure also includes the low-rank solution from \cite{Santamaria2026SymmetricUnitary}, limited to the fully connected architecture, which also coincides with our design under that condition.
Taken together with the fact that our proposed product manifold method is, on the other hand, generalized to arbitrary group-connected settings, the results corroborate that our scheme generalizes that of \cite{Santamaria2026SymmetricUnitary}, without any penalty in the fully connected case.}


\vspace{-0.25ex}
\section{Conclusion}

We proposed a sum-rate maximization framework for reciprocal group-connected \ac{BD-RIS}-assisted \ac{MIMO} systems, modeling the feasible space as a product manifold formed from one symmetric-unitary admissible set per group. The smoothness of each factor was established via a Takagi characterization, and the resulting tuple space was equipped with the product topology and product charts to form a smooth product manifold, kept distinct from its block-diagonal physical embedding in the ambient matrix space. This separation enables blockwise tangent projections and Takagi retractions consistent with the group-connected hardware, while the rate objective remains globally coupled through the common equivalent channel \(\mE\). The framework includes the single-connected and fully connected architectures as special cases, with the group size governing a direct performance-complexity tradeoff confirmed by our simulations.

\vspace{-0.5ex}
\appendices
\section{Detailed Product-Manifold Argument}
\label{app:smooth}

Let \(\mathcal{M}_g=\Us(R_G)\) be equipped with the smooth structure from Proposition~2.
The product topology on \(\Mhat=\prod_g\mathcal{M}_g\) is generated by products of open factor neighborhoods.
For \(\widehat{\bTheta}\in\Mhat\), the product chart in \eqref{eq:product_chart} maps an open neighborhood of \(\widehat{\bTheta}\) to an open subset of \(\R^{Gd_g}\).
If another product chart \(\psi\) is chosen, the coordinate transition is
\begin{equation}
    \varphi\circ\psi^{-1}
    =
    \left(
    \varphi_1\circ\psi_1^{-1},
    \ldots,
    \varphi_G\circ\psi_G^{-1}
    \right),
    \label{eq:product_transition}
\end{equation}
which is smooth because every factor transition is smooth.
Thus \(\Mhat\) is a smooth manifold of dimension \(Gd_g\).

The embedding \(\iota\) in \eqref{eq:physical_embedding} is smooth because each matrix entry of \(\blkdiag(\bTheta_1,\ldots,\bTheta_G)\) is either an entry of one factor block or zero.
It is injective, and its inverse on \(\Mgc\) reads the diagonal blocks in \eqref{eq:group_scattering_block}.
Both maps are smooth in the induced matrix coordinates, so \(\Mgc\) is diffeomorphic to \(\Mhat\).



\end{document}